\def\useieeelayout{1}
\def\showall{0}
\newcommand{\inConf}[1]{\if\showall1{\color{green!50!black}In Journal: #1}\else\if\useieeelayout1{#1}\fi\fi}
\newcommand{\inArxiv}[1]{\if\showall1{\color{blue}In ArXiV: #1}\else\if\useieeelayout0{#1}\fi\fi}
\newtheorem{theorem}{Theorem}
\newtheorem{lemma}{Lemma}
\newtheorem{proposition}{Proposition}
\newtheorem{definition}{Definition}
\newtheorem{problem}{Problem}
\newtheorem{assumption}{Assumption}
\newtheorem{remark}{Remark}
\newtheorem{conjecture}[theorem]{Conjecture}
\newtheorem{question}{Question}
\newtheorem{corollary}{Corollary}
\definecolor{darkgreen}{rgb}{0.125,0.5,0.169}
\tikzset{%
  >=latex,
  inner sep=0pt,%
  outer sep=2pt,%
  mark coordinate/.style={inner sep=0pt,outer sep=0pt,minimum size=3pt,
    fill=black,circle}%
}
\definecolor{steelblue}{RGB}{70,130,180}
\def\bbn{\mathbb N}
\def\bbz{\mathbb Z}
\def\bbr{\mathbb R}
\newcommand{\absval}[1]{\mid {#1} \mid}
\newcommand{\norm}[1]{\left\lVert {#1} \right\rVert}
\renewcommand{\b}[1]{{\color{blue}{#1}}}
\title{\LARGE \bf
Stereographic Projection of Probabilistic \\ Frequency-Domain Uncertainty
}
\author{Anton Nystr\"om, Venkatraman Renganathan, and Michael Cantoni
\thanks{A. Nystr\"om and V. Renganathan contributed equally towards this article. This research project has received funding from the European Research Council (ERC) under the European Union’s Horizon 2020 research and innovation program under grant agreement No 834142 (Scalable Control). A. Nystr\"om is a M.Sc. student in the Faculty of Engineering in Lund University, Sweden. V. Renganathan is with the Cranfield University, UK and this work was done when he was with the Department of Automatic Control - LTH, Lund University, Sweden while also being a member of the ELLIIT Strategic Research Area in Lund University.  M. Cantoni is with the Department of Electrical and Electronic Engineering, The University of Melbourne, Australia. Emails: anton.e.nystrom@gmail.com, v.renganathan@cranfield.ac.uk, cantoni@unimelb.edu.au.}%
}
\newtheorem{theorem}{Theorem}
\newtheorem{lemma}{Lemma}
\newtheorem{proposition}{Proposition}
\newtheorem{definition}{Definition}
\newtheorem{problem}{Problem}
\newtheorem{question}{Question}
\definecolor{darkgreen}{rgb}{0.125,0.5,0.169}
\tikzset{%
  >=latex,
  inner sep=0pt,%
  outer sep=2pt,%
  mark coordinate/.style={inner sep=0pt,outer sep=0pt,minimum size=3pt,
    fill=black,circle}%
}
\definecolor{steelblue}{RGB}{70,130,180}
\title{Stereographic Projection of Probabilistic Frequency-Domain Uncertainty}
\author{\name Anton Nystr\"om 
\email anton.e.nystrom@gmail.com \\
\addr Faculty of Engineering \\ Lund University, Sweden.
\AND
\name Venkatraman Renganathan 
\email v.renganathan@cranfield.ac.uk \\
\addr Cranfield University, UK.
\AND
\name Michael Cantoni 
\email cantoni@unimelb.edu.au \\
\addr Department of Electrical and Electronic Engineering, \\ The University of Melbourne, Australia.
}
\begin{document}

\maketitle
\thispagestyle{empty}
\pagestyle{empty}

\begin{abstract}

This paper investigates the stereographic projection of points along the Nyquist plots of single input single output (SISO) linear time invariant (LTI) systems subject to probabilistic uncertainty. At each frequency, there corresponds a complex-valued random variable with given probability distribution in the complex plane. 
The chordal distance between the stereographic projections of this complex value and the corresponding value for a nominal model, as per the well-known $\nu$-Gap metric of Vinnicombe, is also a random quantity. The main result provides the cumulative density function (CDF) of the chordal distance at a given frequency.
Such a stochastic distance framework opens up a fresh and a fertile research direction on probabilistic robust control theory.  
\end{abstract}

\section{Research Motivation}\label{sec_motivation}
A control system is said to be \emph{robust} if it is insensitive to differences between the actual system and the model of the system which is used to design the controller. 
To render a system robustly stable, 
various control strategies based on the small gain theorem\cite{desoer2009feedback} and $\mathcal{H}_{\infty}$ control theory \cite{bacsar2008h}, and integral quadratic constraints \cite{Anders_IQC}, 
have been developed over the years. Conservative modelling of uncertainties can adversely affect the performance of a control system. The above mentioned deterministic robust control approaches give performance guarantees with certainty for the worst case assumptions on the  
modeling uncertainty. 

From the control design perspective, we want the robust feedback compensator of the nominal plant to work well for the perturbed plant too. For establishing robust stability results, researchers came up with several distance metrics between LTI dynamical systems such as the Gap metric \cite{zames1980unstable, georgiou1988computation, georgiou1989optimal}, and Graph metric \cite{vidyasagar1984graph}. Building on top of their works, Vinnicombe proposed the $\nu$-Gap metric in  \cite{vinnicombePhDThesis, vinnicombe2000uncertainty} having sharper quantitative results than the Gap metric. Specifically for the SISO case, the $\nu$-Gap metric
offers a striking benefit of a direct frequency domain interpretation in terms of the chordal distance between the stereographic projection of the Nyquist plots onto the Riemann sphere. Further, while analysing the robustness of feedback systems in \cite{vinnicombePhDThesis}, Vinnicombe presents an interesting problem namely, \emph{how much do we need to know about a system in order to design a feedback compensator that leaves the closed loop insensitive to what we don't know?} The following robust stability result that answers this question is restated from \cite{vinnicombe_tac_1993} in terms of the $\nu$-Gap metric:
\begin{proposition}
\label{proposition_1}
(From \cite{vinnicombe_tac_1993})
Given a nominal continuous time LTI plant $\bar{P}$, and nominal feedback compensator $\bar{C}$, let 
\begin{subequations}
\begin{align}
\label{eqn_bpc_measure}
&b_{\bar{P}, \bar{C}} 
:= 
\begin{cases}
\norm{H(\bar{P},\bar{C})}^{-1}_{\infty}, &\text{if $H(\bar{P},\bar{C})$ is stable} \\
0, &\text{otherwise},
\end{cases}
\intertext{where} 
&H(\bar{P},\bar{C})
:=
\begin{bmatrix}
\bar{P} \\ I    
\end{bmatrix}
(I-\bar{C}\bar{P})^{-1}
\begin{bmatrix}
-\bar{C} & I    
\end{bmatrix},
\label{eqn_gang_of_four}
\end{align}
\end{subequations}
and $\|\cdot\|_\infty$ denotes the $\mathcal{H}_\infty$ norm.
Then, any controller $\bar{C}$ that 
achieves $b_{\bar{P}, \bar{C}} > \alpha$ stabilises the set of plants $\{ P : \delta_{\nu}(P, \bar{P}) \leq \alpha\}$ and
\begin{align} 
    b_{P,\bar{C}} 
    \geq 
    b_{\bar{P},\bar{C}} - \delta_{\nu}(P, \bar{P}), \label{eqn_bpc_performance_degrade}
\end{align}
where $\delta_\nu(P,\bar{P})$ denotes the $\nu$-gap between $P$ and $\bar{P}$.
\end{proposition}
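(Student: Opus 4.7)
The plan is to reduce all three quantities ($b_{\bar{P},\bar{C}}$, $b_{P,\bar{C}}$, and $\delta_{\nu}(P,\bar{P})$) to pointwise chordal distances on the Riemann sphere, so that the inequality \eqref{eqn_bpc_performance_degrade} falls out of the triangle inequality and the robust-stability claim follows from a winding-number argument. First, I would observe that at any fixed frequency $\omega$ the matrix $H(P,\bar{C})(j\omega)$ in \eqref{eqn_gang_of_four} is a rank-one $2\times 2$ complex matrix, so its induced 2-norm coincides with its Frobenius norm; a direct computation then gives
\[
\|H(P,\bar{C})(j\omega)\| \;=\; \frac{\sqrt{(1+|P(j\omega)|^2)(1+|\bar{C}(j\omega)|^2)}}{|1-\bar{C}(j\omega)P(j\omega)|},
\]
whose reciprocal is exactly the chordal distance $\kappa(P(j\omega),\,1/\bar{C}(j\omega))$, where $\kappa(z_1,z_2)=|z_1-z_2|/\sqrt{(1+|z_1|^2)(1+|z_2|^2)}$ (with $1/0=\infty$ handled in the limiting sense on the Riemann sphere). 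Taking a supremum over $\omega$ and inverting yields the pointwise characterisation $b_{P,\bar{C}}=\inf_{\omega}\kappa(P(j\omega),\,1/\bar{C}(j\omega))$ whenever the closed loop is stable, and likewise with $P$ replaced by $\bar{P}$.

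Next, the inequality \eqref{eqn_bpc_performance_degrade} would follow in a single line from the triangle inequality for $\kappa$ on the extended complex plane: at each $\omega$,
\[
\kappa\bigl(P(j\omega),\,1/\bar{C}(j\omega)\bigr) \;\geq\; \kappa\bigl(\bar{P}(j\omega),\,1/\bar{C}(j\omega)\bigr) \;-\; \kappa\bigl(P(j\omega),\,\bar{P}(j\omega)\bigr),
\]
where the first term on the right is bounded below by $b_{\bar{P},\bar{C}}$ and the second is bounded above by $\sup_{\omega}\kappa(P(j\omega),\bar{P}(j\omega))\leq \delta_{\nu}(P,\bar{P})$, the last inequality being part of Vinnicombe's frequency-domain description of the $\nu$-gap. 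Infimising over $\omega$ on the left then delivers the stated bound, conditional on $H(P,\bar{C})$ being stable.

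The hard part is the robust-stability claim, because $\delta_{\nu}(P,\bar{P})<1$ encodes more than a uniform chordal bound: it carries an additional winding-number/pole-count compatibility condition between $P$ and $\bar{P}$. My plan is to form a continuous homotopy from $\bar{P}$ to $P$ (in Vinnicombe's normalised coprime-factor parametrisation) and to verify that along the path (i) the perturbed Nyquist locus never passes through $1/\bar{C}(j\omega)$, which is guaranteed pointwise by $\alpha<b_{\bar{P},\bar{C}}$ combined with the triangle-inequality estimate above, and (ii) the net encirclement count of $1/\bar{C}$ is preserved, which is precisely the role of the winding-number clause in the definition of $\delta_{\nu}$. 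The argument principle would then certify stability of $H(P,\bar{C})$, after which the inequality from the previous paragraph is no longer conditional. The delicate bookkeeping of encirclements in the presence of right-half-plane or imaginary-axis singularities of the coprime factors is what I expect to be the main technical obstacle; the algebraic bound \eqref{eqn_bpc_performance_degrade} itself is essentially a one-line consequence of the chordal reformulation and the triangle inequality.
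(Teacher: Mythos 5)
First, note that the paper offers no proof of Proposition~\ref{proposition_1}: it is imported verbatim from Vinnicombe \cite{vinnicombe_tac_1993}, so your attempt can only be measured against the standard argument, not an in-paper one. Within that frame, the part of your proposal covering the bound \eqref{eqn_bpc_performance_degrade} is essentially right for SISO plants: the rank-one computation identifying $1/\bar{\sigma}\left(H(P,\bar{C})(j\omega)\right)$ with the chordal distance $\kappa\left(P(j\omega),1/\bar{C}(j\omega)\right)$, the triangle inequality for the chordal metric, and the fact that $\sup_{\omega}\kappa\left(P(j\omega),\bar{P}(j\omega)\right)\leq\delta_{\nu}(P,\bar{P})$ give the inequality whenever the loop is stable (this is exactly the pointwise inequality \eqref{eqn_bpc_performance_degrade_pointwise} the paper quotes, infimised over frequency), and the unstable case is trivial since then the right-hand side is nonpositive once the stability claim is available. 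One caveat: the proposition is stated with matrix blocks $I$, i.e.\ it is not restricted to SISO, and your rank-one/Frobenius reduction to a scalar chordal distance does not survive in the MIMO case, where the corresponding pointwise inequality needs a separate singular-value argument rather than the metric triangle inequality on the sphere.

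The genuine gap is the robust-stability claim, which is the actual content of the proposition and which your sketch names rather than proves. Your homotopy plan presupposes the two facts that constitute the theorem: (i) that a path from $\bar{P}$ to $P$ can be constructed inside the $\delta_{\nu}$-ball along which the chordal margin condition holds at every frequency and every homotopy parameter, and (ii) that the winding-number clause in the definition of $\delta_{\nu}$ forces the closed-loop Nyquist/encirclement count for $P$ (relative to the open-loop pole counts of $P$, not of $\bar{P}$) to certify stability. Step (ii) is precisely the technical heart of Vinnicombe's proof --- relating the winding-number condition on $1+P^{\star}\bar{P}$ (equivalently on the normalized coprime-factor product) together with the unstable pole counts of $P$ and $\bar{P}$ to the encirclement count of the perturbed loop --- and it is asserted in one clause ("which is precisely the role of the winding-number clause") rather than carried out; you yourself flag the encirclement bookkeeping near imaginary-axis singularities as unresolved. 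As it stands the proposal proves the easy half (the performance bound, SISO, conditional on stability) and leaves the stabilisation of the set $\{P:\delta_{\nu}(P,\bar{P})\leq\alpha\}$ as an outline, so it is not yet a proof of the proposition.
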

A point-wise-in-frequency version of \eqref{eqn_bpc_performance_degrade} 
also holds as described in
\cite{vinnicombe2000uncertainty,date2003lower}:
\begin{align}
\label{eqn_bpc_performance_degrade_pointwise}
\rho(P(j \omega),\bar{C}(j \omega))
\geq 
\rho(\bar{P}(j \omega),\bar{C}(j \omega)) - \kappa(P(j \omega), \bar{P}(j \omega)),    
\end{align} 
where $\rho(\bar{P}(j \omega),\bar{C}(j \omega)) := 1/\bar{\sigma}(H(\bar{P}(j\omega), \bar{C}(j\omega)))$, and 
\begin{align}
\label{eqn_kappa_siso}
    \kappa(P(j \omega), \bar{P}(j \omega)) = \frac{|P(j\omega)-\bar{P}(j\omega)|}{\sqrt{1+|P(j\omega)|^2}\sqrt{1+|\bar{P}(j\omega)|^2}},
\end{align} 
which 
is 
the chordal distance between stereographic projections of $P(j\omega)$ and $\bar{P}(j\omega)$ onto the Riemann sphere as elaborated in \S\ref{sec_main_SISO}.
From Proposition~\ref{proposition_1},
when $\bar{P}$ and $P$ are \emph{close} in the $\nu$-gap metric, then any stabilizing compensator for the former achieves similar closed loop performance in terms of the measure of performance $b_{P, C}$.
 %

Suppose that $P$ is subject to probabilistic uncertainty with respect to a nominal model $\bar{P}$, which may arise within a system identification context, for example, as motivated subsequently. In line with the probabilistic robust control theory of \cite{NASA_report, Fabrizio_ACC}, one could ask the following question: 
\begin{question} \label{question_1}
Given an identified nominal model, as one realization of an uncertain system, with what probability will a controller that stabilizes the nominal model also stabilize another randomly drawn realization of the uncertain system? 
\end{question}

\begin{figure}
    \centering
    \includegraphics[width=\linewidth]{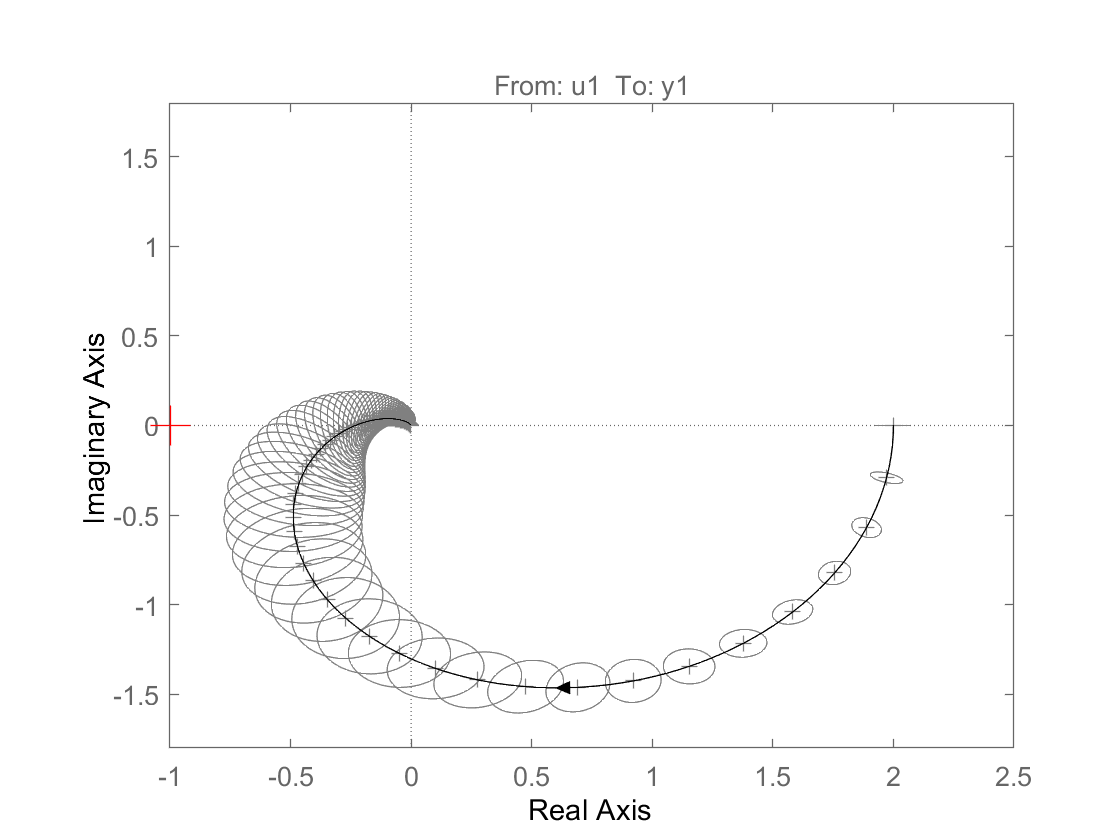}
    \caption{Nyquist plot of the identified system \eqref{eq:model_set} along with uncertainty regions corresponding to $5$ standard deviations on a frequency grid of interval length $5\,\mathrm{rad/s}$ is shown here.}
    \label{fig:nyq_plt}
\end{figure}

As a preliminary step towards addressing this question,
we study the random variable $\kappa(P(j\omega), \bar{P}(j\omega))$ at given frequency $\omega\in\mathbb{R}$, 
for given nominal model $\bar{P}$; we often drop the argument $j \omega$ subsequently for brevity of notations.
Given $d \in [0,1]$, the corresponding CDF 
is given by
\begin{align*}
\mathbb{P}
\left(
\kappa(P, \bar{P}) \leq d
\right) 
&= 
\mathbb{P}
\left(
\kappa(P, \bar{P}) - \rho(\bar{P},\bar{C}) \leq d - \rho(\bar{P},\bar{C})
\right) \\
&= 
\mathbb{P}
\left(
\rho(\bar{P},\bar{C}) - \kappa(P, \bar{P}) \geq \rho(\bar{P},\bar{C}) - d 
\right) \\
&\leq
\mathbb{P}
\left(
\rho(P,\bar{C}) \geq \rho(\bar{P},\bar{C}) - d
\right),
\end{align*}
where the last inequality follows from \eqref{eqn_bpc_performance_degrade_pointwise}. In particular, the value of the CDF of $\kappa(P,\bar{P})$ at the specified value $d$ is a controller independent lower bound for $\mathbb{P}\left(
\rho(P,\bar{C}) \geq \rho(\bar{P},\bar{C}) - d
\right)$. Understanding this quantity is a precursor to understanding $\mathbb{P}(b_{P,\bar{C}} \geq b_{\bar{P},\bar{C}} - d)$, which is the probability that closed-loop performance degrades by at most $d$ for any controller that stabilizes $\bar{P}$. The alternative is to more directly study the random variable $\rho(P,\bar{C})$ for a particular choice of controller $\bar{C}$, limiting the scope for addressing Question 1.
With this in mind, the main aim in this paper is to characterise the relationship between the distribution of the random variable $\kappa(P(j\omega),\bar{P}(j\omega))$, and an assumed known distribution of $P(j\omega)$ in the complex plane, at a given frequency $\omega$. 
We present below two motivating problems where closed-loop performance degradation with respect to random plants may arise.

\subsection{Motivating Problem $1$}
\begin{figure}
    \centering    \includegraphics[width=\linewidth]{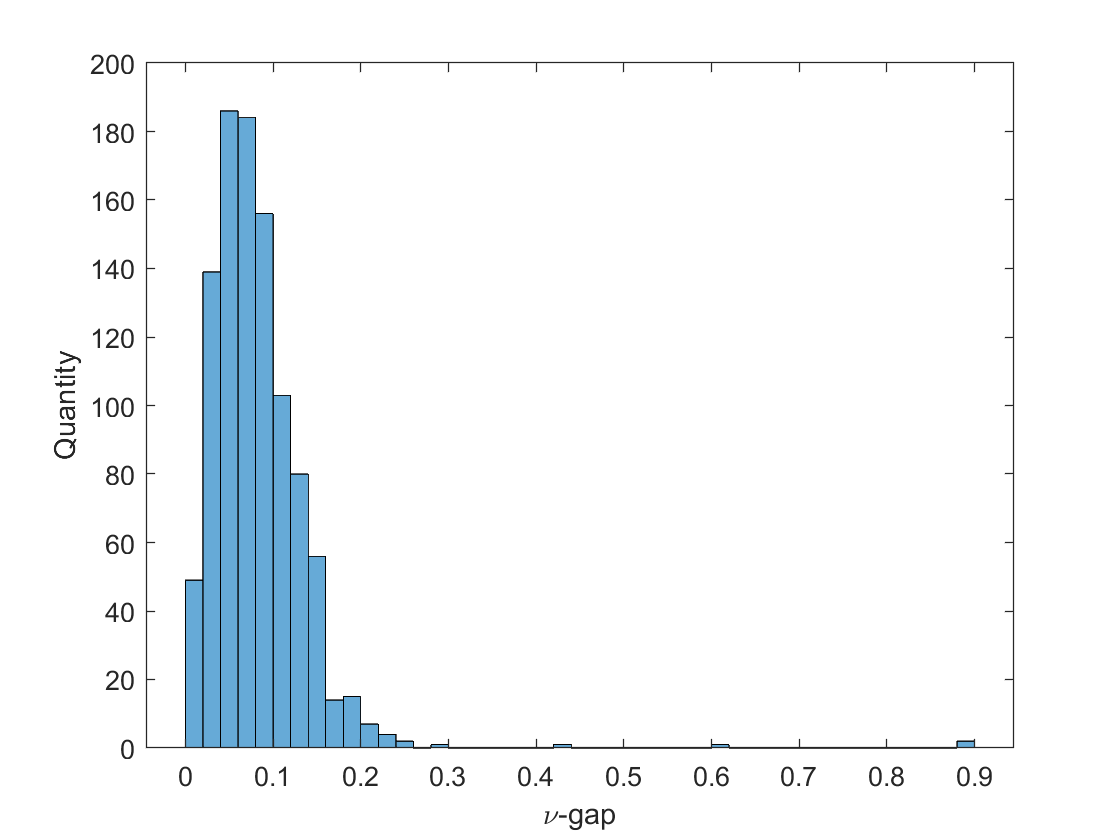}
    \caption{The histogram of the $\nu$-gap between the nominal model and $1000$ independent trials is shown here. Clearly, more systems with very small distance measure from the nominal model got realised during the trials.}
    \label{fig:nugap-histogram}
\end{figure}
Suppose that an LTI model with no zeros and three poles in the same location models the process well. Then, the transfer function corresponding to that process will be of the form 
\begin{equation}\label{eq:model_set}
    P(s)=\frac{b}{(1+\tau s)^3}.
\end{equation}
Here $b,\tau\in\mathbb{R}$ are \emph{unknown} parameters and may be estimated from data. Since all physical system outputs are subject to measurement noise, uncertainty is introduced into the identification process. In Figure \ref{fig:nyq_plt}, one such estimation of the example process Nyquist plot is shown where a random binary sequence was generated as the input to a chosen nominal model $\bar{P}(s)$ (obtained using $b = 2$, $\tau = \frac{1}{10}$ in \eqref{eq:model_set}) and Gaussian white noise was added to the output representing measurement noise before estimating the parameters. In the figure we can also see point-wise confidence regions (each of which corresponds to $5$ standard deviations) on a frequency grid of interval length $5\,\mathrm{rad/s}$. 

For every frequency, a region around the estimated point now corresponds to a region where the Nyquist plot of the true system at that frequency is likely to be, given that the model parametrization accurately reflects the true system. Note that for this example, the gain uncertainty at the phase crossover frequency suggests that if we had identified the model from a different data sample, we could very well identify a model with a very different gain margin for a stabilizing controller. 
Though we do not know what the true system is, we believe that it belongs to some set of models in proximity to the estimated nominal model with respect to their Nyquist plots, and by extension their closed loop behaviour. Further, we have an estimation of the probability distribution over that set of models by estimating parameter uncertainty.

\begin{figure}
    \centering
    \includegraphics[width=\linewidth]{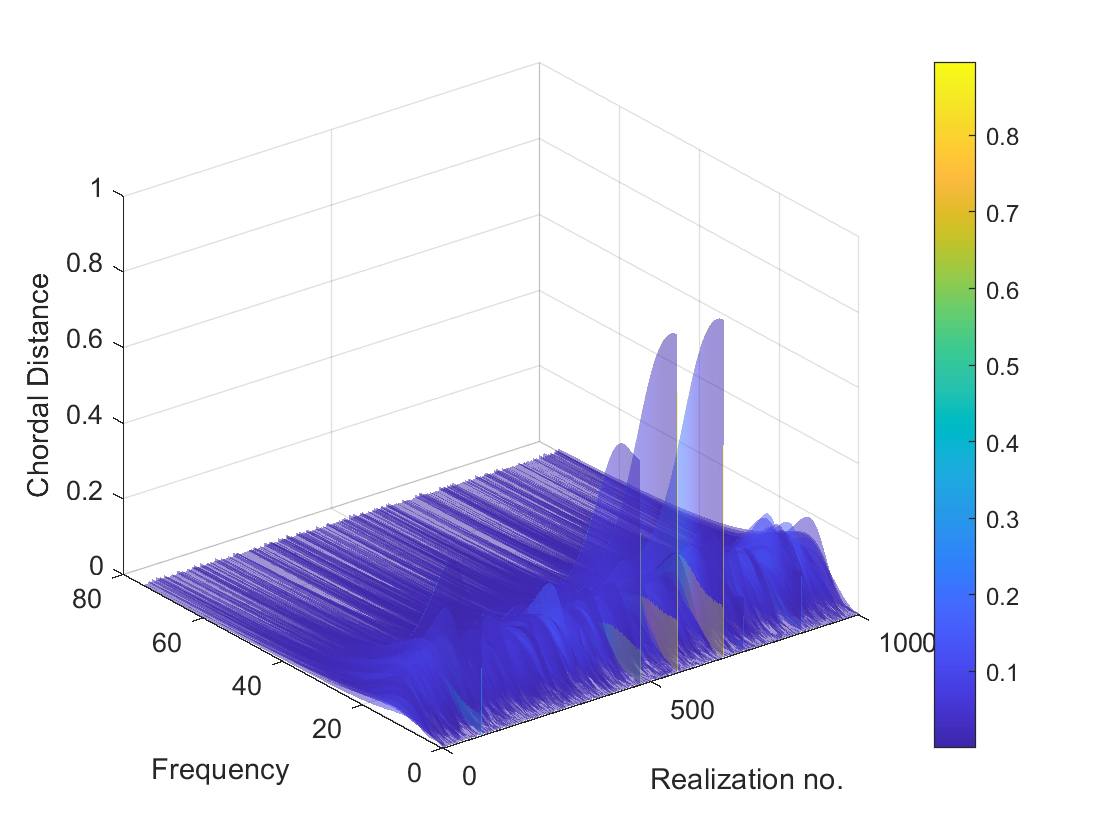}
    \caption{Realizations of chordal distance from the nominal model across a range of frequencies during $1000$ trials are plotted here.}
    \label{fig:realization-surf}
\end{figure}

To motivate further, we simulated the system identification procedure described earlier for $N = 1000$ independent trials (with different random binary sequences inputs and measurement noise) to identify $N$ plants of the form \eqref{eq:model_set}. The results shown in Figures \ref{fig:nugap-histogram} and \ref{fig:realization-surf} indicate that the point-wise chordal distance can be formulated as a random quantity. 

\subsection{Motivating Problem $2$}
\begin{figure}
    \centering
    \begin{tikzpicture}

\def\Radius{2cm} 
\coordinate (O) at (0,0); 
\coordinate (N) at (0,\Radius); 
\coordinate (S) at (0,-\Radius); 

\draw[color=brown!100,line width=2pt] (O) circle (\Radius);

\coordinate (P) at (-10:\Radius); 
\coordinate (Q) at (-40:\Radius); 
\coordinate (R) at (-25:\Radius); 
\coordinate (A) at (1.9,-\Radius); 
\coordinate (B) at (3.3,-\Radius); 
\coordinate (C) at (2.6,-\Radius); 
\draw[fill=red] (C) circle (1.5pt) node[above right] {};

\draw[fill=blue] (R) circle (1.5pt) node[right] {$\phi^{-1}(\bar{P})$};
\draw[fill=red] (Q) circle (1.5pt) node[left] {$\phi^{-1}(P)$};

\draw[dashed, color=black] (S) -- (N);
\draw[dashed, color=blue, line width=1.25pt] (R) -- (N);
\draw[solid, color=blue, line width=1.25pt] (R) -- (C);
\draw[dashed, color=red, line width=1.25pt] (N) -- (Q);
\draw[solid, color=red, line width=1.25pt] (Q) -- (A);
\draw[solid, color=green, line width=2.5pt] (R) -- (1.50,-1.3);


\draw[thick] (-4,-\Radius) -- (4,-\Radius);
\draw[thick, color=orange, line width=3pt] (A) -- (B);

\node at (S) [below] {$\mathfrak{S}$};
\node at (N) [above] {$\mathfrak{N}$};
\node at (C) [above right] {$\bar{P}$};
\node at (A) [above left] {$P$};
\node at (-3,-\Radius) [above] {$\mathbb{R}$};
\node at (B) [below left] {$\mathbf{S}$};
\draw[color=orange, line width=1pt,fill=orange!40, opacity=0.2] (N) -- (A) -- (B) -- (N);
\draw pic["$\alpha$",draw=black,->,angle eccentricity=1.2,angle radius=1.75cm] {angle=Q--N--R};

\end{tikzpicture}
\caption{Stereographic projection on $\mathbb{R}$ is illustrated here. The uncertainty in point $P$ is depicted as an orange interval $\mathbf{S} \subset \mathbb{R}$. Both the nominal point $\bar{P}$ and its projection $\phi^{-1}(\bar{P})$ are shown in blue color. A realization of the random point $P \in \mathbf{S}$ and its projection $\phi^{-1}(P)$ are shown in red color. Since $P$ is random, the angle $\alpha := \measuredangle P\mathfrak{N}\bar{P}$ comes random and hence the corresponding chordal distance line in green color are random as well.}
\label{fig_nu_gap_real_case}
\end{figure}
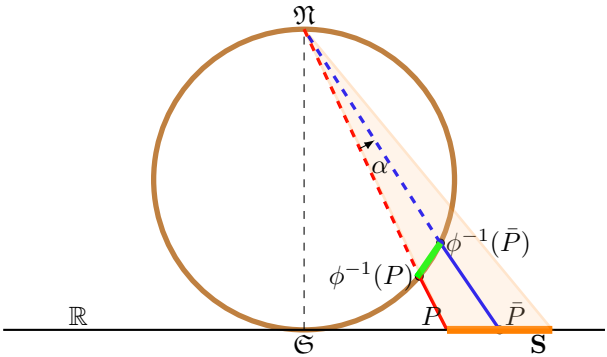
Let us consider the simple scalar real case $(P \in \mathbb{R})$. Now consider a circle $\mathcal{C}$ with unit diameter tangent with respect to real number line at its south pole ($\mathcal{C}$ is the analogue of Riemann sphere in the complex setting which will be defined later in the paper). The line that connects the north pole of the circle and the points on $\mathbb{R}$ intersects the circle precisely at one point and the mapping $\phi: \mathcal{C} \rightarrow \mathbb{R}$ is called the Stereographic Projection. Now consider the case when the point $P \in \mathbb{R}$ is random and let $\mathbf{S} \subset \mathbb{R}$ denote the compact support set of the distribution $f_{\mathsf{P}}$ governing the uncertainty in $P$. Further, assume that a nominal value denoted by $\bar{P} \in \mathbf{S} \subset \mathbb{R}$ is known apriori. Note that, $P \sim f_{\mathsf{P}}$ can take any value in $\mathbf{S}$. The distance between the projected values of nominal value $\bar{P}$ and any realization $P$ on the circle $\mathcal{C}$ becomes random as well. A simple illustration is shown in Figure \ref{fig_nu_gap_real_case}. Note that the angle $\alpha := \measuredangle P\mathfrak{N}\bar{P}$ is a random quantity and hence the chordal distance between $\phi^{-1}(\bar{P})$ and $\phi^{-1}(P)$ given by $|\sin(\alpha)|$ turns out to be random as well. In such a setting, we want to study the expected amount of perturbation needed from $\bar{P}$ to realize any $P \in \mathbf{S}$. 



\subsection{Contribution and Paper Organization} In this paper, our main contribution is a formulation of the chordal distance between the stereographic projections of Nyquist plots at a given frequency as a random quantity, and the derivation of an explicit formula for its CDF; see Theorem \ref{thm_pdf_kappa}. To answer Question \ref{question_1}, we need the CDF of the $\nu$-Gap, which requires us to deal with the correlation of Nyquist plot along frequencies, apart from handling the winding number condition. We are pursuing this significant challenge in ongoing work.

The rest of the paper is organised as follows: Following a short introduction of the preliminaries, the Nyquist plots with probabilistic uncertainty and their stereographic projection from the Riemann sphere is presented in \S\ref{sec_main_SISO} along with the discussion of the CDF of the chordal distance point-wise in frequency. The proposed idea is demonstrated using a numerical example in \S\ref{sec_num_sim}. The results presented in this paper can be reproduced using the code that is made publicly available at \url{https://github.com/antonnystrm/CDC_2024}. Finally, the main findings of the paper are summarized in \S\ref{sec_conclusion} along with a planned directions for the future research. For more comprehensive details on the ideas presented in this paper, the interested readers are referred to \cite{Anton_MS_Thesis}.   

\subsection{Notation and Preliminaries}
The cardinality of the set $A$ is denoted by $\left | A \right \vert$. The operator $\backslash$ denotes the set subtraction.
The set of real numbers, integers and the natural numbers are denoted by $\bbr, \bbz, \bbn$ respectively. 
The space of complex numbers is denoted by $\mathbb{C}$ and $j$ represents the imaginary unit. The real and imaginary parts of the complex number $z \in \mathbb{C}$ is denoted by $\mathrm{Re}(z)$ and $\mathrm{Im}(z)$ respectively. Given $z \in \mathbb{C}$, we denote its complex conjugate as $z^{\star} \in \mathbb{C}$.
 The subset of real numbers greater than $a \in \bbr$ is denoted by $\bbr_{> a}$. 
For convenience, we will write parameterized multi-variable functions as follows $f(x(t),y(t))=f(x,y)(t)$.

A probability space is defined by a triplet $(\Omega, \mathcal{F}, \mathbb{P})$, where $\Omega$, $\mathcal{F}\subset 2^\Omega$, and $\mathbb{P}$ denote the sample space, event space and the probability function respectively with $\mathbb{P}: \mathcal{F} \rightarrow [0,1]$. A real random variable $x \in \mathbb{R}$ with probability density function $f_{\mathsf{x}}$ is denoted by $x \sim f_{\mathsf{x}}$ and the CDF is denoted by $\mathbb{F}_{\mathsf{x}}(X) = \mathbb{P}(x \leq X) = \int_{-\infty}^{X} f_{\mathsf{x}}(x)\mathrm{d}x$.
A complex random variable $z$ on the probability space $(\Omega, \mathcal{F}, \mathbb{P})$ can be considered as a random vector in $\mathbb{R}^{2}$ with the real and imaginary parts being real random variables in each dimension.

\section{Stereographic Projection} \label{sec_main_SISO}
In this section, we first analyse the uncertainty at a point on the Nyquist plot as described by a distribution in $\mathbb{C}$ with a convex, compact support set. In particular, we would like to characterize the uncertainty associated with the
chordal distance between corresponding stereographic projections onto the Riemann sphere relative to a nominal model. 

Given a frequency $\omega$, consider the complex value of the transfer function of the plant $P(j \omega) \in \mathbb{C}$ as uncertain. We take a stochastic approach in which $P(j \omega) \sim f_{\mathsf{P}}$, meaning $P(j \omega)$ is a random variable in the space 
$(\Omega(\omega), \mathcal{F}(\omega), \mathbb{P}(\omega))$ whereby $P: \Omega(\omega) \rightarrow \mathbf{S}(\omega) \subset \mathbb{C}$. It is assumed that the support of $f_{\mathsf{P}}$ is the compact set $\mathbf{S}(\omega)$.\footnote{Without compact $\mathbf{S}$, there will be a non-zero probability that the uncertainty around the Nyquist plot at every frequency includes the $-1$ point, precluding unit gain feedback stability across the set of uncertain plants.}
That is, the probability density function $f_{\mathsf{P}}$ is a non-negative mapping that integrates to $1$ over $\mathbf{S}(\omega)$. 
Furthermore, we assume that the nominal model $\bar{P}(j \omega) \in \mathbf{S}(\omega)$ has been fixed. From now on, we drop the frequency argument for convenience, on the understanding that all the following developments corresponds to the quantities at the particular frequency $\omega$. 

The Riemann sphere, denoted by $\mathfrak{R}$, models the extended complex plane. It is the surface of the unit diameter sphere in $\mathbb{R}^3$, tangent to $\mathbb{C}\sim\mathbb{R}^2$ at the origin. 
Its north and south pole are denoted by $\mathfrak{N}=(0,0,1)$ and $\mathfrak{S}=(0,0,0)$, respectively.
\begin{definition}
Let $R = (x, y, z) \in \mathfrak{R} \backslash \mathfrak{N}$. The straight line through $\mathfrak{N}$ and $R$ intersects $\mathbb{C}$
exactly at one point $\phi(R) \in \mathbb{C}$. 
The map $\phi: \mathfrak{R} \backslash \mathfrak{N} \rightarrow \mathbb{C}$
given by $R \mapsto \phi(R)$ is called the stereographic projection.
\end{definition}

Since the distribution $f_{\mathsf{P}}$ with a convex, compact support set $\mathbf{S}$ is defined on $\mathbb{C}$, we require the inverse of the stereographic projection. Let $\phi(R) \in \mathbb{C}$ 
be a point on the complex plane.
Then, the straight line from $\phi(R) \in \mathbb{C}$ to $\mathfrak{N}$ intersects the boundary of the Riemann sphere exactly at one point $R = (x, y, z) \in \mathfrak{R} \backslash \mathfrak{N}$. The map $\phi^{-1} : \mathbb{C} 
\rightarrow \mathfrak{R} \backslash \mathfrak{N}$ given by $\phi(R) \mapsto R$ is the required inverse of the stereographic projection. Given this, we would like to characterize the set $\mathfrak{R}_{\mathbf{S}} \subset \mathfrak{R}$. We define the \emph{projected support set} 
by
\begin{align}
\mathfrak{R}_{\mathbf{S}} 
&= \phi^{-1}(\mathbf{S}) 
:=
\left\{
R \in \mathfrak{R} 
\mid
R = \phi^{-1}(r),~ r \in \mathbf{S}
\right\}.
\end{align}

Given $c \in \mathbb{C}$,
the corresponding point $R = (r_x, r_y, r_z) \in \mathfrak{R}\backslash\mathfrak{N}$ given by $R = \phi^{-1}(c)$ has the following Cartesian coordinates:
\begin{align} \label{eqn_sphere_cartesian_coords}
r_x = \frac{\mathrm{Re}(c)}{1 + \absval{c}^{2}}; ~ 
r_y = \frac{\mathrm{Im}(c)}{1 + \absval{c}^{2}}; ~
r_z = \frac{\absval{c}^{2}}{1 + \absval{c}^{2}}.
\end{align}
Similarly, given a point $R = (r_x, r_y, r_z) \in \mathfrak{R} \backslash \mathfrak{N}$, the coordinates of the corresponding stereographically projected point $c \in \mathbb{C}$ is given by
\begin{align}
\label{eqn_complexplane_cartesian_coords}
    \phi(R) 
    =
    \left(\frac{r_x}{1 - r_z}, \frac{r_y}{1 - r_z}\right) =
    \frac{r_x}{1 - r_z} + j  \frac{r_y}{1 - r_z}.
\end{align}

\subsection{Formulating Chordal Distance as a Random Quantity}

The nominal model $\bar{P} \in \mathbf{S}  \subset \mathbb{C}$ can be mapped to a point $\bar{R} \in \mathfrak{R}_{\mathbf{S}} \subseteq \mathfrak{R}$ using the map $\phi^{-1}$. Since the stereographic projection constitutes a bijection between the sets $\mathfrak{R}_{\mathbf{S}}$ and $\mathbf{S}$, we may infer a probability density over $\mathfrak{R}_{\mathbf{S}}$ from the density $f_{\mathsf{P}}$ over $\mathbf{S}$. 
The chordal distance 
between $R$ and $\bar{R}$ is indeed a random variable like $P$ in the probability space 
$(\Omega, \mathcal{F}, \mathbb{P})$, meaning $K := \kappa(\bar{P}, P): \Omega \rightarrow [0,1]$. We now state the main problem of interest of this paper. 

\begin{problem} \label{problem_1}
Determine the CDF of $K$ given $f_{\mathsf{P}}$ with convex compact support $\mathbf{S}$ and the nominal model $\bar{P}$.    
\end{problem}

Two perspectives on this problem are considered in the rest of this section. The second takes a form that may facilitate extension of the results to MIMO systems. The challenges associated with extending these single frequency chordal distance results to corresponding $\nu$-gap results is then discussed. The following lemma is used to establish the first perspective.
\begin{lemma} \label{lemma_fzw}
Given random variables $\mathsf{x}$ and $\mathsf{y}$, with joint density $f_{\mathsf{xy}}(x,y)$, and constants $a,b\in\mathbb{R}$, define $r:=\sqrt{a^2+b^2}$, and the random variables
\begin{subequations}
\label{eqn_zw}
\begin{align}
    \mathsf{z} &:= g_{1}(\mathsf{x},\mathsf{y}) = (\mathsf{x}-a)^{2} + (\mathsf{y}-b)^{2}, \\
    \mathsf{w} &:= g_{2}(\mathsf{x},\mathsf{y}) = \mathsf{x}^{2} + \mathsf{y}^{2} + 1.
\end{align}
\end{subequations}
Then, the joint density of $\mathsf{z}$ and $\mathsf{w}$ is given by
\begin{align}\label{eqn_fzw}
    f_{\mathsf{zw}}(z,w)=\begin{cases}
        f(z,w),&\vert r-\sqrt{w-1} \vert<\sqrt{z}\\
        \text{undefined}, &\vert r-\sqrt{w-1} \vert=\sqrt{z}\\
        0, &\text{otherwise}
    \end{cases}
\end{align}
where 
\begin{subequations}
\begin{align}
f(z,w)
&:= 
\frac{1}{4 r^{2}c_2(z,w)}\sum^{2}_{i = 1} f_{\mathsf{xy}}(x_{i}(z,w), y_{i}(z,w)) \\
c_1(z,w) 
&:= 
\left(\frac{1}{2} - \frac{z-w+1}{2 r^{2}}\right) \\
c_2(z,w) 
&:= 
\frac{1}{2} \sqrt{\frac{2(z+w-1)}{r^{2}} - \frac{(z-w+1)^{2}}{r^{4}} - 1}
\end{align}
\end{subequations} 
and
\begin{align}\label{eqn_intersections}
\begin{bmatrix}
x_i \\ y_i    
\end{bmatrix}    
&= 
c_1(z,w)
\begin{bmatrix}
a \\ b    
\end{bmatrix} + (-1)^{i-1} 
c_2(z,w)
\begin{bmatrix}
-b \\ a    
\end{bmatrix},\, i=1,2
\end{align} 
denote the solutions to \eqref{eqn_zw} for $(z, w)$ satisfying $\left\lvert r-\sqrt{w-1} \right\rvert<\sqrt{z}$. 
\end{lemma}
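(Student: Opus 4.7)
The natural approach is a multivariate change-of-variables computation. The transformation $T(x,y) := (g_1(x,y), g_2(x,y))$ is generically two-to-one, since both components are quadratic in $(x,y)$, so for $(z,w)$ with preimages $(x_1,y_1),(x_2,y_2)$ the pushforward density takes the standard form
\begin{align*}
f_{\mathsf{zw}}(z,w) = \sum_{i=1}^{2} \frac{f_{\mathsf{xy}}(x_i,y_i)}{|J(x_i,y_i)|},
\end{align*}
where $J$ is the Jacobian determinant of $T$. The first step is then the direct computation
\begin{align*}
J(x,y) = \det\begin{pmatrix} 2(x-a) & 2(y-b) \\ 2x & 2y \end{pmatrix} = 4(bx - ay),
\end{align*}
which vanishes precisely on the line $\{bx = ay\}$ through the origin in the direction $(a,b)$.

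Next I would invert $T$ explicitly. Expanding $\mathsf{z}$ and eliminating $x^2+y^2$ via $\mathsf{w}$ reduces the problem to the linear constraint $ax + by = (r^2 + w - 1 - z)/2$, so the preimages are the intersections of this line with the circle $x^2+y^2 = w-1$. Decomposing $(x,y)$ along the orthonormal pair $(a,b)/r$ and $(-b,a)/r$ yields a parallel component equal to $c_1(z,w)\,r$ (the foot of the perpendicular from the origin to the line) and a perpendicular component equal to $\pm c_2(z,w)\,r$ (the chord half-length, recovered by solving $c_1^2 r^2 + t^2 r^2 = w-1$ for $t$). Matching coefficients produces exactly the $c_1,c_2$ in the lemma statement and the two preimages in \eqref{eqn_intersections}. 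Substituting into $J$, the parallel component $c_1(a,b)$ lies on $\{bx=ay\}$ and hence contributes zero, so $bx_i - ay_i = \mp(a^2+b^2)\,c_2 = \mp r^2 c_2$ and $|J(x_i,y_i)| = 4 r^2 c_2(z,w)$, independent of $i$. Plugging this into the change-of-variables sum recovers the claimed $f(z,w)$.

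The final step is to characterize the support in $(z,w)$-space. Real, distinct preimages exist iff $c_2(z,w)^2 > 0$; rearranging that inequality algebraically yields the two-circle intersection condition $|\sqrt{z} - \sqrt{w-1}| < r < \sqrt{z} + \sqrt{w-1}$, which is equivalent (under the nonnegativity of the radii) to the stated $|r - \sqrt{w-1}| < \sqrt{z}$. The boundary $c_2 = 0$ corresponds to tangent circles, where $J$ vanishes and the pushforward formula fails, matching the \emph{undefined} branch of \eqref{eqn_fzw}; when the circles are disjoint there are no preimages and the density is zero. The main obstacle I anticipate is not the algebra but the geometric bookkeeping: confirming that \eqref{eqn_intersections} exhausts the preimages, that $|J|$ vanishes precisely on the tangent boundary, and that the degenerate case $r=0$—where $\mathsf{z}$ and $\mathsf{w}$ become functionally dependent and no joint density exists—is implicitly excluded by the formulas, which all divide by powers of $r$.
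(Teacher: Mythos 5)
Your proposal is correct and follows essentially the same route as the paper: the standard two-to-one change-of-variables formula, the Jacobian $4(bx-ay)$, the preimages identified as the two circle-intersection points giving exactly $c_1,c_2$, and $|J(x_i,y_i)|=4r^2c_2(z,w)$. One small caveat: your claim that the two-sided condition $\lvert\sqrt{z}-\sqrt{w-1}\rvert<r<\sqrt{z}+\sqrt{w-1}$ (i.e.\ $c_2^2>0$, which is the correct characterization of two real preimages) is \emph{equivalent} to the one-sided condition $\lvert r-\sqrt{w-1}\rvert<\sqrt{z}$ is not exact — the latter can hold with one circle strictly inside the other (e.g.\ $r=\tfrac12$, $\sqrt{z}=10$, $\sqrt{w-1}=9$) — but this imprecision is present in the paper's own statement and proof as well, and it does not affect the density on its actual support.
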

\begin{proof}
The proof is deferred to the appendix.
\end{proof}
\noindent \textbf{Remark 1:} Note that \eqref{eqn_fzw} is a valid probability density function.

\noindent From \eqref{eqn_kappa_siso}, $K$ can be written as
\begin{align}
K
&:= 
\frac{\left| \bar{P} - P \right\vert}{\sqrt{1 + \left| \bar{P} \right\vert^{2}} \sqrt{1 + \left| P \right\vert^{2}}} = 
\frac{1}{c} 
\sqrt{\frac{g_{1}(\mathsf{x},\mathsf{y})}{g_{2}(\mathsf{x},\mathsf{y})}} \label{eqn_kappa_variables},
\end{align}
where $c = \sqrt{1 + r^{2}}$ is known constant quantity with $r = \left| \bar{P} \right\vert$. Now let $Q = \frac{\mathsf{z}}{\mathsf{w}}$ be a random variable constructed using \eqref{eqn_zw} from Lemma \ref{lemma_fzw}. Then, from \eqref{eqn_kappa_variables} it follows that 
\begin{align} \label{eqn_K_qc}
    K = \frac{\sqrt{Q}}{c}.
\end{align}
It is now possible to estalish a first answer to problem \ref{problem_1}.
\begin{theorem} \label{thm_pdf_kappa}
Let $P =x+jy \in \mathbb{C}$ follow the distribution $f_{\mathsf{P}}$ with compact support $\mathbf{S}$ corresponding to the joint density $f_\mathsf{xy}$ of the real random variables $x$ and $y$. Further, assume that nominal model $\bar{P} \in \mathbb{C}$ is known.
Then, for $d\in[0,1]$, the CDF of $K:=\kappa(P,\bar{P})$ is
\begin{equation}
\label{eqn_kappa_pdf}
\mathbb{F}_{\mathsf{K}}(d) 
=
\int^{\infty}_{1} \int^{u(d,t)}_{\ell(d,t)}
\frac{1}{4r^2c_2(l,t)}
\sum^{2}_{i = 1} f_{\mathsf{xy}}(x_{i}, y_{i})(l,t) \, dl \, dt,
\end{equation}
where $r = \left| \bar{P}\right\vert$ and
\begin{align*}
u(d,t)
&= \min\{ td^{2}(1+r^{2}),(r + \sqrt{t-1})^{2}\}
\\
\ell(d,t)
&= \min\{ td^{2}(1+r^{2}),(r - \sqrt{t-1})^{2}\}\\
c_2(l,t) 
&= 
\frac{1}{2} \sqrt{\frac{2(l+t-1)}{r^{2}} - \frac{(l-t+1)^{2}}{r^{4}} - 1}.    
\end{align*}
\end{theorem}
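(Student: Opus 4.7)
The plan is to reduce the computation of $\mathbb{F}_K(d)$ to a double integral against the joint density $f_{\mathsf{zw}}$ supplied by Lemma~\ref{lemma_fzw}, applied with $a=\mathrm{Re}(\bar{P})$, $b=\mathrm{Im}(\bar{P})$, and $r=|\bar{P}|$. Starting from \eqref{eqn_K_qc}, since $K\geq 0$ and $d\in[0,1]$, the event $\{K\leq d\}$ is equivalent to $\{\sqrt{Q}\leq cd\}$, i.e., $\{\mathsf{z}\leq d^2(1+r^2)\mathsf{w}\}$. I would therefore write
\begin{align*}
    \mathbb{F}_K(d) = \iint_{\{(l,t):\, l \leq d^2(1+r^2)t\}} f_{\mathsf{zw}}(l,t)\, dl\, dt,
\end{align*}
so that the problem collapses to identifying the portion of this half-plane on which $f_{\mathsf{zw}}$ is non-trivial.

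Next, I would restrict integration to the support of $f_{\mathsf{zw}}$. Since $\mathsf{w}=\mathsf{x}^2+\mathsf{y}^2+1\geq 1$, we have $t\in[1,\infty)$. For each such $t$, the level sets $g_1=l$ and $g_2=t$ are two circles in the $(x,y)$-plane, one of radius $\sqrt{l}$ centred at $(a,b)$ and one of radius $\sqrt{t-1}$ centred at the origin, whose centres are a distance $r$ apart. They intersect transversally at two points (the generic case handled by Lemma~\ref{lemma_fzw}) precisely when $(r-\sqrt{t-1})^2<l<(r+\sqrt{t-1})^2$. Intersecting this window with the probabilistic constraint $l\leq d^2(1+r^2)t$ yields for each $t$ an interval $[\ell(d,t),u(d,t)]$ with the stated $\min$-based endpoints, a form that uniformly captures all three sub-cases (the threshold $d^2(1+r^2)t$ falling below, inside, or above the geometric support). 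Substituting the formula for $f_{\mathsf{zw}}$ from Lemma~\ref{lemma_fzw} on this interval then produces \eqref{eqn_kappa_pdf}.

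The main obstacle is making the limits of integration precise. Lemma~\ref{lemma_fzw} emphasises only the lower bound $|r-\sqrt{w-1}|<\sqrt{z}$, but the symmetric geometric upper bound $\sqrt{z}<r+\sqrt{w-1}$ is equally necessary for two intersection points to exist and is implicitly absorbed into its \emph{otherwise} case. Verifying that the $\min\{\cdot,\cdot\}$ recipe for $\ell(d,t)$ and $u(d,t)$ correctly encodes both these bounds together with the constraint $l\leq d^2(1+r^2)t$, and checking that the measure-zero boundary where $c_2(l,t)=0$ contributes nothing to the integral, is the central bookkeeping step that closes the proof.
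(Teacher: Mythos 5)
Your proposal is correct and follows essentially the same route as the paper: reduce $\{K\leq d\}$ to $\{\mathsf{z}\leq d^{2}(1+r^{2})\mathsf{w}\}$, integrate the joint density $f_{\mathsf{zw}}$ from Lemma~\ref{lemma_fzw} over that region, and cut the inner integral down to the geometric support $(r-\sqrt{t-1})^{2}<l<(r+\sqrt{t-1})^{2}$, which yields exactly the $\min$-based limits $\ell(d,t)$ and $u(d,t)$ of \eqref{eqn_kappa_pdf}. Your observation that the upper geometric bound $\sqrt{z}<r+\sqrt{w-1}$ must be enforced alongside the condition stated in Lemma~\ref{lemma_fzw} is a fair bookkeeping point, but it matches what the paper implicitly does via the upper limit $(r+\sqrt{t-1})^{2}$, so there is no substantive difference.
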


\begin{proof}
Let the known nominal model be $\bar{P} = a+ bj$.
For a given $d \in [0,1]$, using Lemma \ref{lemma_fzw} and \eqref{eqn_K_qc}, the CDF of $K$ is given by 
\begin{align*}
\mathbb{F}_{\mathsf{K}}(d)
= \mathbb{P}\left(\frac{\sqrt{Q}}{c} < d\right) 
= \mathbb{P}\left( Q < c^2 d^2 \right) 
= \mathbb{F}_{\mathsf{Q}}(c^2 d^2),
\end{align*}
where $\mathbb{F}_{\mathsf{Q}}(\cdot)$ denotes the CDF of $Q$ and is described \cite{Papoulis_Book} as
\begin{align*}
    \mathbb{F}_{\mathsf{Q}}(q)
    :=
    \int^{\infty}_{0} \int^{tq}_{0}
    f_{\mathsf{zw}}(l,t) \, dl \, dt.
\end{align*}
Here, $f_{\mathsf{zw}}(l,t)$ denotes the joint probability distribution function (PDF) of $z$ and $w$ obtained using Lemma \ref{lemma_fzw}. By realizing that $f_{\mathsf{zw}}(z,w)$ is zero outside the region specified by $\left\lvert r-\sqrt{w-1} \right\rvert < \sqrt{z}$, we get 
\begin{align*}
    \mathbb{F}_{\mathsf{Q}}(q)
    &=
    \int^{\infty}_{1} \int^{\min\{tq,\left(r+\sqrt{t-1}\right)^2\}}_{\min\{tq,\left(r-\sqrt{t-1}\right)^2\}}
    f(l,t) \, dl \, dt, \quad \text{where} \\  
    f(l,t)
    &=
    \frac{1}{4 r^{2}c_2(l,t)}\sum^{2}_{i = 1} f_{\mathsf{xy}}(x_{i}(l,t), y_{i}(l,t)).
\end{align*}
Here, $x_i(l,t), y_i(l,t)$ are given by \eqref{eqn_intersections}.
Setting $q = c^{2} d^{2} = (1 + r^{2}) d^{2}$ yields the final result.
\end{proof}


In general, to determine the CDF, we need to describe the area under which the integration has to be carried out given a distance threshold. To this end, note that the open ball of radius $d \in (0,1)$ centered on $\bar{R} = \phi^{-1}(\overline{P})$ on the Riemann sphere is given by
\begin{align} \label{eqn_open_ball}
\mathcal{B}_d\left(\bar{R}\right)
:=
\left\{
R \in \mathfrak{R}
\mid
\kappa(\phi(\bar{R}),\phi(R)) < d
\right\} \subset \mathfrak{R}.
\end{align}
The set $\phi\left(\mathcal{B}_d\left(\bar{R}\right)\right) \subset \mathbb{C}$ is again a ball in $\mathbb{C}$, albeit with a center-point that is different to $\bar{P}$, and radius $\hat{d}$ that is possibly not equal to $d$. The next lemma determines both $\hat{d}$ and $\hat{P}$ as a function of $d$ and $\bar{R}$.
\begin{lemma}
\label{lemma_complex_ball_radius}
Let $\bar{R}=(x,y,z)\in \mathfrak{R}$ be given. Given an open ball $\mathcal{B}_{\frac{d}{2} \left(\cos{\frac{\Delta\theta}{4}}\right)^{-1}}\left(\bar{R}\right)\subset \mathfrak{R}$, with $d \in [0,1]$, such that $\mathcal{B}_{\frac{d}{2} \left(\cos{\frac{\Delta\theta}{4}}\right)^{-1}}\left(\bar{R}\right) \cap \mathfrak{N} = \emptyset$, the Euclidean diameter of its stereographic projection $D := \phi\left(\mathcal{B}_{\frac{d}{2} \left(\cos{\frac{\Delta\theta}{4}}\right)^{-1}}\left(\bar{R}\right) \cap \mathfrak{R}\right) \subset \mathbb{C}$ is given by 
\begin{subequations}
\begin{align}
\hat{d}
&=\left\lvert \frac{\sin(\theta+\Delta\theta)}{1-\cos(\theta+\Delta\theta)}- \frac{\sin\theta}{1-\cos\theta}\right\rvert, \quad \text{where}, \\ 
\sin{\frac{\Delta\theta}{2}}&
=d, \theta = \theta(\bar{R}) \text{ such that }
\cos{\left(\theta + \frac{\Delta\theta}{2}\right)} = 2z-1.
\end{align}
\end{subequations}
Further, the center point $\hat{P}\in \mathbb C$ of the disc $D$ is given by
\begin{subequations}
    \begin{align}
        &\hat{P}=\frac{\left\lvert \frac{\sin{\left(\theta+\Delta\theta\right)}}{1-\cos{\left(\theta+\Delta\theta\right)}}\right\rvert+\left\lvert\frac{\sin\theta}{1-\cos\theta}\right\rvert}{2}\left(
            \cos\varphi(\Bar{R}) + \sin\varphi(\Bar{R}) j
        \right), \\ 
        &\text{where} \nonumber \\
        &\varphi(\Bar{R}) 
        \colon= 
        \mathrm{sign}(y)\arccos{\left(\frac{x}{\sqrt{x^2+y^2}}\right)}.\label{eq:phi(R)}
    \end{align}
\end{subequations}
\end{lemma}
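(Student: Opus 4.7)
The plan is to exploit three properties of stereographic projection, namely circle preservation, rotational symmetry about the $\mathfrak{N}\mathfrak{S}$-axis, and an explicit polar-coordinate formula for $\phi$. Combined, these reduce a genuinely two-dimensional geometric question to one-dimensional trigonometry along a single meridian.

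First, I would identify the spherical cap. By \eqref{eqn_open_ball} and the well-known fact that the chordal distance equals Euclidean distance in $\mathbb{R}^3$, the set $\mathcal{B}_{\tilde r}(\bar{R})$ with $\tilde r = d/(2\cos(\Delta\theta/4))$ is exactly the set of points on $\mathfrak{R}$ at Euclidean distance less than $\tilde r$ from $\bar{R}$. On the unit-diameter sphere, a point at angular distance $\alpha$ from $\bar{R}$ (angle subtended at the sphere's center) has Euclidean distance $\sin(\alpha/2)$; equating this with $\tilde r$ and invoking the identity $\sin(\Delta\theta/2) = 2\sin(\Delta\theta/4)\cos(\Delta\theta/4) = d$ shows the cap has angular half-width $\Delta\theta/2$ measured from $\bar{R}$.

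Second, since the cap avoids $\mathfrak{N}$ by hypothesis, its boundary is a circle on $\mathfrak{R}$ not passing through $\mathfrak{N}$, so circle preservation makes its image a circle in $\mathbb{C}$; bijectivity and compactness of $\phi$ on $\mathfrak{R}\setminus\mathfrak{N}$ then force $D$ to be the closed disc bounded by this circle. To locate a diameter, I would use the plane $\Pi$ spanned by $\bar{R}$ and the $z$-axis: since $\Pi$ contains the cap's axis of rotational symmetry, the cap is reflection-symmetric across $\Pi$. Stereographic projection intertwines this reflection with reflection across the line $L\subset\mathbb{C}$ through the origin at angle $\varphi(\bar R)$, so $D$ is symmetric about $L$ and therefore $L\cap D$ is a diameter of $D$.

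Third, I would compute the endpoints of this diameter. They are the images of the two boundary points of the cap lying in $\Pi$, which sit on the great circle $\Pi\cap\mathfrak{R}$ at polar angles (measured from $\mathfrak{N}$) $\theta$ and $\theta+\Delta\theta$, with $\bar{R}$ at the midpoint polar angle $\psi(\bar{R})=\theta+\Delta\theta/2$; the identity $\cos\psi(\bar{R})=2z-1$ follows from the parametrisation $R=(\tfrac12\sin\psi\cos\varphi,\tfrac12\sin\psi\sin\varphi,\tfrac12+\tfrac12\cos\psi)$. Intersecting the ray from $\mathfrak{N}$ through such a point with $\{z=0\}$ yields the polar projection formula $\phi(R)=\tfrac{\sin\psi}{1-\cos\psi}\,e^{j\varphi}$. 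Evaluating this at both endpoints gives $\hat d$ immediately as the distance between them; their midpoint along $L$ gives $\hat P$, which under the natural assumption that the cap does not wrap past $\mathfrak{S}$ (so both polar angles lie in $(0,\pi)$ and both cotangents are positive) matches the stated $(|\cdot|+|\cdot|)/2$ expression once $e^{j\varphi(\bar R)}$ is expanded into Cartesian form via $\varphi(\bar{R})=\mathrm{sign}(y)\arccos(x/\sqrt{x^2+y^2})$.

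The main obstacle is the symmetry step in paragraph three: one has to justify rigorously that $L\cap D$ is a full diameter of $D$ rather than merely some chord, which boils down to checking that $\phi$ equivariantly conjugates reflection about $\Pi$ to reflection about $L$. Once that reduction is in place, the remaining derivation is routine trigonometry along a single meridian.
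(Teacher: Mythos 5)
Your proposal is correct and follows essentially the same route as the paper's proof: restrict to the meridian through $\bar{R}$, use circle preservation and the meridian-to-radial-line property of stereographic projection, parametrise the two extreme cap points in spherical coordinates with $\cos(\theta+\tfrac{\Delta\theta}{2})=2z-1$, and project via $\phi(R)=\tfrac{\sin\psi}{1-\cos\psi}e^{j\varphi}$ to read off $\hat d$ and $\hat P$. You additionally spell out two steps the paper leaves implicit — why the ball radius $\tfrac{d}{2}(\cos\tfrac{\Delta\theta}{4})^{-1}$ yields a cap of chordal diameter $d$, and the reflection-symmetry argument showing the projected meridian chord is a genuine diameter — and you correctly flag the sign caveat behind the absolute values in the $\hat P$ formula.
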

\begin{proof}
The proof is deferred to the appendix.    
\end{proof}
\inArxiv{
\textbf{Remark:} 
Note that for a point $\Bar{R}$ close to the north pole then $\theta$ is small which makes $\hat{d}$ large for any $d$ and for $\Bar{R}$ close to the south pole then $\theta$ is close to $\pi$ making $\hat{d}$ very small for any $d$.
}
\noindent To precisely define the CDF of $K$, we define
\begin{align}
\label{eqn_spd_region}
    \mathbf{S}^{\hat{P}}_{d} 
    :=     \phi\left(\mathcal{B}_d\left(\bar{R}\right)\right) \, \cap \, \mathbf{S} .
\end{align}
Then, for $d \in [0,1]$, the CDF of the point-wise chordal distance between the random quantity $P$ and the known nominal model $\bar{P}$ 
is given by
\begin{align}
\label{eq:cdf pcd general}
\mathbb{F}_{\mathsf{K}}(d)
&:=
\int^{d}_{0} f_{\mathsf{K}}(k) \, dk \nonumber = \mathbb{P}\left(\kappa\left(\Bar{P},P\right)<d\right)\\
&= \mathbb{P}
\left(P\in {\mathbf{S}^{\hat{P}}_{d}}\right) 
=
\int_{\mathbf{S}^{\hat{P}}_{d}} f_{\mathsf{P}}(a) \, da.
\end{align} 

\textbf{Remark 2:} 
The aforementioned second perspective follows from the expression for $\mathbb{F}_{\mathsf{K}}$ in terms of $f_{\mathsf{P}}$ and the integral in \eqref{eq:cdf pcd general}. Indeed, it is evident that $\mathbb{F}_{\mathsf{K}}$ can be computed using the Euclidean distance in $\mathbb{C}$ rather than the chordal distance metric on the Riemann sphere. Note that the expression for the CDF given by \eqref{eq:cdf pcd general} is in a more general form, as it can handle any support set $\mathbf{S}$ and any distribution $f_{\mathsf{P}}$. It provides scope for generalizing to MIMO systems $\bar{P}$. 

\textbf{Challenges:} The next step to extend this analysis to find the CDF of the $\nu$-gap, requires various challenges to be overcome. While it is possible to define the family of random variables $K_\omega := \kappa(\bar{P}(j\omega), P(j\omega))$,
the instances $K_{\omega_{1}}$ and $K_{\omega_{2}}$ at two different frequencies would be correlated. Dealing with this has inherent challenges given that finding a joint density of $K_{\omega_{1}}$ and $K_{\omega_{2}}$ 
would require more knowledge about the source of uncertainty in underlying random transfer function $P$; one possibility to consider is use of Copula computed via Sklar's theorem \cite{sklar1959fonctions}. 
Further, determining $\mathbb{P}( (\forall \omega)~ K_{\omega} < \beta)$ for given $\beta \in [0,1]$ involves conjunction across all frequencies, as a further complication. 
On the other hand, we could instead use disjunction and characterization of the existence of a particular frequency, say $\hat{\omega}$ where $\mathbb{P}(K_{\hat{\omega}} > d)$, to establish a conservative upper bound without the need to establish joint densities across all frequency. Consideration of $K_\omega$ as a stochastic process is yet another possible approach.
\section{Numerical Simulations}
\label{sec_num_sim}
To further provide anchoring to the result given in Theorem \ref{thm_pdf_kappa}, this section provides a numerical example where the integral \eqref{eqn_kappa_pdf} is plotted over $d \in [0,1]$ to show the resulting CDF. In the example, we have selected the distribution $f_{\mathsf{P}}$ to be Gaussian on $\mathbb{C}$ and centered on $\bar{P}=1+j$ with covariance matrix given by $\mathrm{diag}(1, \frac{1}{4})$. The numerical integration was performed using Matlab's $\mathrm{integral2}$ function. In Figure \ref{fig:KwCDF}, the valid monotone CDF, $\mathbb{F}_{\mathsf{K}}$ can be seen. 
\begin{figure}
    \centering
    \includegraphics[width=\linewidth]{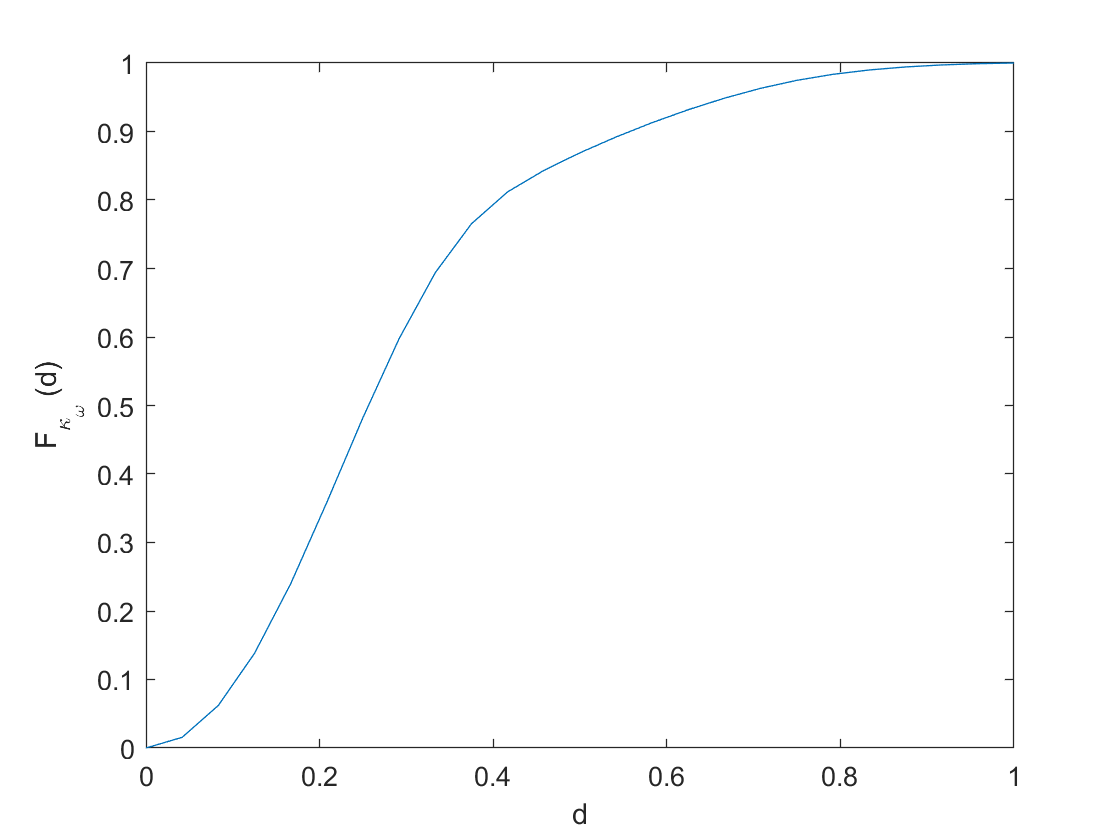}
    \caption{The CDF of $\kappa(\bar{P}, P)$ following Theorem \ref{thm_pdf_kappa} when the underlying distribution $f_{\mathsf{P}}$ is Gaussian is depicted here.}
    \label{fig:KwCDF}
\end{figure}

\section{Conclusion \& Future Outlook} 
\label{sec_conclusion}
A fresh perspective on probabilistic robust control theory by treating the point-wise chordal distance metric between LTI dynamical systems as a random quantity was presented. Knowing the CDF of the plant model at a frequency, the corresponding CDF of the point-wise chordal distance metric was obtained. The proposed framework presents a fertile research landscape for future research. A systematic development of future research articles is planned in the following sequence to concretize the proposed theory properly involving both SISO and MIMO extensions.  
\begin{enumerate}
    \item Extend the analysis from point-wise chordal distance setting made in this paper to $\nu$-Gap setting.
    \item Given two LTI systems $P_1, P_2$ following respective distributions $\mathbb{P}_1, \mathbb{P}_2$ with compact supports $\mathbf{S}_1, \mathbf{S}_2$ at a frequency $\omega \in \mathbb{R}_{\geq 0}$, find the worst case expectation of the point-wise chordal distance between their projected points onto the Riemann sphere exceeding a threshold. It would be one of the first attempts at computing the distance between two stochastic dynamical systems. 
\end{enumerate}


\inConf{
\bibliographystyle{IEEEtran}
\bibliography{references}
}
\inArxiv{
\bibliographystyle{tmlr}
\bibliography{references}

\begin{thebibliography}{10}
\providecommand{\url}[1]{#1}
\csname url@samestyle\endcsname
\providecommand{\newblock}{\relax}
\providecommand{\bibinfo}[2]{#2}
\providecommand{\BIBentrySTDinterwordspacing}{\spaceskip=0pt\relax}
\providecommand{\BIBentryALTinterwordstretchfactor}{4}
\providecommand{\BIBentryALTinterwordspacing}{\spaceskip=\fontdimen2\font plus
\BIBentryALTinterwordstretchfactor\fontdimen3\font minus \fontdimen4\font\relax}
\providecommand{\BIBforeignlanguage}[2]{{%
\expandafter\ifx\csname l@#1\endcsname\relax
\typeout{** WARNING: IEEEtran.bst: No hyphenation pattern has been}%
\typeout{** loaded for the language `#1'. Using the pattern for}%
\typeout{** the default language instead.}%
\else
\language=\csname l@#1\endcsname
\fi
#2}}
\providecommand{\BIBdecl}{\relax}
\BIBdecl

\bibitem{desoer2009feedback}
C.~A. Desoer and M.~Vidyasagar, \emph{Feedback systems: input-output properties}.\hskip 1em plus 0.5em minus 0.4em\relax SIAM, 2009.

\bibitem{bacsar2008h}
T.~Ba{\c{s}}ar and P.~Bernhard, \emph{H-infinity optimal control and related minimax design problems: a dynamic game approach}.\hskip 1em plus 0.5em minus 0.4em\relax Springer Science \& Business Media, 2008.

\bibitem{Anders_IQC}
A.~Megretski and A.~Rantzer, ``System analysis via integral quadratic constraints,'' \emph{IEEE transactions on automatic control}, vol.~42, no.~6, pp. 819--830, 1997.

\bibitem{zames1980unstable}
G.~Zames and A.~K. El-Sakkari, ``Unstable systems and feedback: The gap metric,'' in \emph{Proc. of the Allerton Conference}, 1980, pp. 380--385.

\bibitem{georgiou1988computation}
T.~T. Georgiou, ``On the computation of the gap metric,'' \emph{Systems \& Control Letters}, vol.~11, no.~4, pp. 253--257, 1988.

\bibitem{georgiou1989optimal}
T.~T. Georgiou and M.~C. Smith, ``Optimal robustness in the gap metric,'' \emph{IEEE Transactions on Automatic Control}, vol.~35, no.~6, pp. 673--686, 1990.

\bibitem{vidyasagar1984graph}
M.~Vidyasagar, ``The graph metric for unstable plants and robustness estimates for feedback stability,'' \emph{IEEE Transactions on Automatic Control}, vol.~29, no.~5, pp. 403--418, 1984.

\bibitem{vinnicombePhDThesis}
G.~Vinnicombe, ``Measuring robustness of feedback systems,'' Ph.D. dissertation, University of Cambridge, 1992.

\bibitem{vinnicombe2000uncertainty}
------, \emph{Uncertainty and Feedback, H Loop-shaping and the V-gap Metric}.\hskip 1em plus 0.5em minus 0.4em\relax World Scientific, 2000.

\bibitem{vinnicombe_tac_1993}
------, ``Frequency domain uncertainty and the graph topology,'' \emph{IEEE Transactions on Automatic Control}, vol.~38, no.~9, pp. 1371--1383, 1993.

\bibitem{date2003lower}
P.~Date and M.~Cantoni, ``A lower bound on achieved closed-loop performance based on finite data,'' in \emph{2003 European Control Conference (ECC)}.\hskip 1em plus 0.5em minus 0.4em\relax IEEE, 2003, pp. 743--746.

\bibitem{NASA_report}
L.~G. Crespo and S.~P. Kenny, \emph{Robust control design for systems with probabilistic uncertainty}.\hskip 1em plus 0.5em minus 0.4em\relax NASA Technical Reports Server, 2005.

\bibitem{Fabrizio_ACC}
G.~C. Calafiore and F.~Dabbene, ``Probabilistic robust control,'' in \emph{2007 American Control Conference}.\hskip 1em plus 0.5em minus 0.4em\relax IEEE, 2007, pp. 147--158.

\bibitem{Anton_MS_Thesis}
A.~Nyström, ``Understanding probabilistic uncertainty using $\nu$-gap,'' \emph{MS Thesis, Department of Automatic Control, Lund University, Sweden}, {2024}.

\bibitem{Papoulis_Book}
A.~Papoulis and S.~U. Pillai, \emph{Probability, random variables, and stochastic processes}.\hskip 1em plus 0.5em minus 0.4em\relax Fourth Edition, McGraw-Hill Europe: New York, NY, USA, 2002.

\bibitem{sklar1959fonctions}
M.~Sklar, ``Fonctions de r{\'e}partition {\`a} n dimensions et leurs marges,'' in \emph{Annales de l'ISUP}, vol.~8, no.~3, 1959, pp. 229--231.

\bibitem{Geometric_Measure_Book}
H.~Federer, \emph{Geometric measure theory}.\hskip 1em plus 0.5em minus 0.4em\relax Springer, 2014.

\end{thebibliography}
}

\inArxiv{
\section*{Appendix}
}
\appendix
\subsection*{Proof of Lemma \ref{lemma_fzw}}
We will use the well known result (e.g., see \cite{Papoulis_Book, Geometric_Measure_Book}) that for a mapping of the kind in \eqref{eqn_zw}, with more general functions $g_1$ and $g_2$, the following holds true:
\begin{align}\label{eqn_cdf_transform}
    f_{\mathsf{zw}}(z,w)=\sum_{i=0}^n\frac{1}{\vert J(x_i,y_i)\vert}f_{\mathsf{xy}}(x_i,y_i)
\end{align}
where $x_i,y_i$ are all solutions to the mapping for the corresponding $z,w$ and $J(x_i,y_i)$ is the Jacobian determinant of the mapping evaluated in the solutions $x_i,y_i$ and is given by
\begin{align}\label{eqn_jacobian}
\left\lvert 
J(x,y) 
\right\rvert
=
4
\left\lvert 
\begin{bmatrix}
    b & -a
\end{bmatrix}\begin{bmatrix}
    x \\ y
\end{bmatrix}\right\rvert.
\end{align}
We can interpret \eqref{eqn_zw} as circles with radii $\sqrt{\mathbf{w}-1}$ and $\sqrt{\mathbf{z}}$ and centers in the origin and $\begin{bmatrix}
    a \\b
\end{bmatrix}$ respectively. Then $x_i, y_i$ are the points of intersection of the circles given by the following cases namely: 1) When $\vert r-\sqrt{w-1} \vert<\sqrt{z}$ and $z>0, w>1$, we have two intersection points which are given by \eqref{eqn_intersections}, 2) When $\vert r-\sqrt{w-1} \vert=\sqrt{z}$ and $z>0, w>1$, we get $c_{2}(z,w) = 0$ and it leads to an unique intersection at $\begin{bmatrix} x_1 & y_1 \end{bmatrix}^{\top} = c_{1}(z,w) \begin{bmatrix}a & b\end{bmatrix}^{\top}$. For all other cases there is no intersection. Inserting these results into \eqref{eqn_jacobian}, we obtain the following results for when two intersections exist.
\begin{align*}
&\left\lvert J(x_i(z,w),y_i(z,w))
\right\rvert
\\
&=
4
\left\lvert 
c_1(z,w)\begin{bmatrix}
    b & -a
\end{bmatrix}\begin{bmatrix}
    a \\ b
\end{bmatrix}\pm c_2(z,w)\begin{bmatrix}
    b & -a
\end{bmatrix}\begin{bmatrix}
    -b \\ a
\end{bmatrix}
\right\rvert \\
&=
4r^2c_2(z,w).  
\end{align*}
Inserting this into \eqref{eqn_cdf_transform} for each case yields \eqref{eqn_fzw}. \hfill $\square$

\subsection*{Proof of Lemma \ref{lemma_complex_ball_radius}}
\begin{proof}
We know that a circle which does not intersect the north pole on the Riemann sphere maps to another circle on the complex plane through the stereographic projection. Furthermore, a meridian on the Riemann sphere maps to a straight radial line. Take some circle on $\mathfrak R$ with chordal diameter $d\in[0,1]$ and some center point $\Bar{R}=(x,y,z)$. Now consider the meridian through $\Bar{R}$ and its intersection with the circle which yields two points $R_1$ and $R_2$ between which the chordal distance is $d$. Furthermore, $P_1\colon=\phi(R_1)$ and $P_2\colon=\phi(R_2)$ will lie on the opposite sides of the projected circle and thus at a distance $\hat{d}$ equal to the projected diameter. Consider $R_1=(\frac{1}{2}, \theta, \varphi)$ and $R_2=(\frac{1}{2}, \hat{\theta}, \varphi)$, with $\hat{\theta}=\theta+\Delta\theta$ to be the points described above given in spherical coordinates with its origin in $(x,y,z)=(0, 0, \frac{1}{2})$. The radial coordinate is fixed at $\frac{1}{2}$ due to the points being on $\mathfrak R$. They lie on the meridian at the fixed azimuth angle $\varphi$ and at a chordal distance $d$ from each other such that $\sin\frac{\Delta\theta}{2}=d.$ We know $\varphi$ only determines the direction of the projected diameter line but not its length. As such, $\hat{d}$ is invariant of $\varphi$ and we set it equal to zero. Now, equivalently consider the points in Cartesian coordinates to get
\begin{align*}
R_1=\frac{1}{2}\left(\sin\theta, 0, 1+\cos\theta\right), \quad
R_2=\frac{1}{2}\left(\sin\hat{\theta}, 0, 1+\cos\hat{\theta}\right).
\end{align*}
Projecting these onto $\mathbb{C}$ using the stereographic projection, we get real points 
\begin{align*}
    P_1=\frac{\sin\theta}{1-\cos\theta}, \quad   P_2=\frac{\sin\hat{\theta}}{1-\cos\hat{\theta}}.
\end{align*}
We then get 
\begin{equation*}
    \hat{d}=\left\lvert P_2-P_1\right\rvert=\left\lvert \frac{\sin{\left(\theta+\Delta\theta\right)}}{1-\cos{\left(\theta+\Delta\theta\right)}}- \frac{\sin\theta}{1-\cos\theta}\right\rvert
\end{equation*}
where 
\begin{align*}
\sin{\frac{\Delta\theta}{2}}=d,  \theta=\theta(\bar{R})
\text{ such that }\cos\left(\theta+\frac{\Delta\theta}{2}\right)=2z-1.
\end{align*}

Furthermore we know the center point is given by $\frac{P_1+P_2}{2}$. Since direction now plays a part we need to consider the azimuth angle $\varphi$ again, however we note that the distance from the origin to $P_1$ and $P_2$ respectively is still unaffected. We therefore find for any $\varphi\in[0,\;2\pi)$ that
\begin{align*}
    P_1&=\frac{\sin\theta}{1-\cos\theta}\; \left(\cos\varphi+\sin\varphi j\right), \\   
    P_2&=\frac{\sin\hat{\theta}}{1-\cos\hat{\theta}}\; \left(\cos\varphi+\sin\varphi j\right)
\end{align*}
where $\varphi=\varphi(\Bar{R})$ and the expression for the center point $\hat{P}$ follows from $\hat{P}=\frac{p_1+p_2}{2}$ yielding the result.
\end{proof}







\end{document}